\newtheorem{theorem}{Theorem}
\newtheorem{definition}{Definition}
\newenvironment{proof}[1][Proof]{\noindent\textbf{#1.} }{\hfill $\Box$\\[2mm]} 
\def\lf{\tiny}
\def\nnll{\refstepcounter{linenumber}\lf\thelinenumber}
\newcommand{\myparagraph}[1]{\vspace{2pt}\noindent \textbf{#1}}
\newcounter{linenumber}
\def\S{\ensuremath{\mathcal{S}}}
\def\A{\ensuremath{\mathcal{A}}}
\def\I{\ensuremath{\mathcal{I}}}
\def\O{\ensuremath{\mathcal{O}}}
\def\T{\ensuremath{\mathcal{T}}}
\def\Nat{\ensuremath{\mathbb{N}}}
\def\val{\textit{val}}
\newcommand{\remove}[1]{}
\newcommand{\id}[1]{\mbox{\textit{#1}}}
\newcommand{\ignore}[1]{}
\def\AL{\textit{AL}}
\def\SCN{\textit{SCN}}
\begin{document}

\title{Set-Consensus Collections are Decidable}

\author{
Carole Delporte-Gallet
\protect\footnote{IRIF, Universit\'e Paris-Diderot} 
\hspace{1cm}
Hugues Fauconnier
\protect\footnotemark[1]
\\
\\
Eli Gafni
\protect\footnote{UCLA} 
\hspace{1cm}
Petr Kuznetsov
\protect\footnote{T\'el\'ecom ParisTech, corresponding author,
  \texttt{petr.kuznetsov@telecom-paristech.fr}}
\thanks{The research leading to these results has
  received funding from the Agence Nationale de la Recherche,  under
  grant agreement ANR-14-CE35-0010-01, project DISCMAT.}  
}
\date{}
\maketitle

\begin{abstract}
A natural way to measure the power of a distributed-computing model
is to characterize the set of tasks that can be solved in it. 
In general, however, the question of whether a given task can be
solved in a given model 
is undecidable, even if we only consider the wait-free shared-memory model. 

In this paper, we address this question for restricted classes  of models and tasks. 
We show that the question of whether a collection $C$ of
\emph{$(\ell,j)$-set consensus} objects, for various $\ell$ (the number of
processes that can invoke the object) and $j$ (the number of distinct
outputs the object returns), can be used by $n$ processes
to solve wait-free $k$-set consensus is decidable. 
Moreover, we provide a simple $O(n^2)$ decision algorithm, 
based on a dynamic programming solution to the Knapsack optimization problem.

We then present an \emph{adaptive} wait-free set-consensus algorithm that,
for each set of participating processes, achieves the
best level of agreement that is possible to achieve using $C$.
Overall, this gives us a complete characterization of a read-write
model  defined by a collection of set-consensus objects through its
\emph{set-consensus power}.

We conjecture that any ``reasonable'' shared-memory can be represented
by a collection of set-consensus tasks and, thus,
characterized by the set-consensus power.
\end{abstract}





\section{Introduction}
A plethora of models of computation were proposed for distributed
environments. The models vary in timing assumptions they make, types of
failures they assume, and communication primitives they employ. 
It is hard to say \emph{a priori} whether one model provides more
power to the programmer than the other.  
A natural way to measure this power 
is to characterize the set of distributed  tasks that can be solved in
a model. 
In general, however, the question of whether a given task can be
solved in the popular \emph{wait-free} read-write model, i.e., tolerating asynchrony and failures of
arbitrary subsets of processes, is
undecidable~\cite{GK99-undecidable}.
Of course, in models in which processes can additionally access arbitrary objects, the question is not
decidable either.
However, many natural models have been shown to be characterized by their
power to solve set consensus~\cite{DFGT11}.

In this paper, we consider models in which $n$ completely asynchronous processes
communicate through reads and writes in the shared memory and, in addition,
can access \emph{set-consensus} objects.   
An $(\ell,j)$-set-consensus object solves $j$-set consensus among $\ell$
processes, i.e., the object can be accessed by
up to $\ell$ processes with \emph{propose} operations that
take natural numbers as inputs and return natural numbers as
outputs, so that the set of outputs is a subset of inputs of size at
most $j$.
Set consensus is a generalization of consensus and, like consensus~\cite{Her91},
exhibits a \emph{universailty} property: $\ell$ processes can use
$(\ell,j)$-set consensus and read-write registers to implement $j$
state machines, ensuring that at least one of them makes
progress~\cite{GG11-univ}.
In this paper, we explore what level of agreement, and thus ``degree of
universality'', can be achieved using any number of objects from a
given set-consensus collection.       

The special case when only one type of set consensus
can be used in the implementation was resolved
in~\cite{BG93-TR,ChaR96,Rei96}.
Assuming that $k\geq j\lceil n/\ell \rceil$, we trivially solve
$j\lceil n/\ell \rceil$-set consensus, by splitting $n$ processes into $\lceil n/\ell \rceil$ groups
of size $\ell$ (or less).
A slightly more complex converse bound~\cite{BG93-TR,ChaR96,Rei96}, accounting for the
``delta'' between $n$ and $\ell\lceil n/\ell \rceil$, resolves the
special case when only one type of set consensus object can be used.


Characterizing a general model in which processes communicate via
 objects in an arbitrary collection $C$ of possibly different
 set-consensus objects is more difficult.
For example, let $C$ be $\{(2,1),(5,2)\}$, i.e.,
every $2$ processes in our system can solve consensus and every
$5$ can solve $2$-set consensus. 
What is the best level of agreement we can achieve using registers and
an arbitrary number of objects in $C$ in a system of $9$
processes? 
One can easily see that $4$-set consensus can be solved: 
the first two
pairs of processes solve consensus and the remaining $5$ invoke
$2$-set consensus, 
which would give at most $4$ different outputs.
One can also let the groups of the first $5$ and the remaining $4$
each solve $2$-set consensus. 
(In general, any two set-consensus objects $(\ell_1,j_1)$ and $(\ell_2,j_2)$ can be
used to solve $(\ell_1+\ell_2,j_1+j_2)$-set consensus.) 
But could we do $(9,3)$-set consensus with $C$?
   
We propose a simple way to characterize the power of a
set-consensus collection.  
By convention, let $(\ell_0,j_0)$ be $(1,1)$, and note that $(1,1)$-set
consensus is trivially solvable.
We show that a collection $C=\{(\ell_0,j_0),
(\ell_1,j_1),\ldots,(\ell_m,j_m)\}$ solves $(n,k)$-set consensus
if and only if there exist
$x_0,x_1,\ldots,x_m\in \Nat$, such that
$\sum_i \ell_i x_i \geq n$ and $\sum_i
j_i x_i \leq k$. 
Thus, determining the power of $C$ is 
equivalent to solving a variation of  the Knapsack optimization
problem~\cite{knapsack}, where each $j_i$ serves as the ``weight''
of an element in $C$, i.e., how much disagreement it may incur, and each
$\ell_i$ serves as its ``value'', i.e., how many processes it is able to
synchronize.
We describe a simple $O(n^2)$ algorithm for computing the power of $C$ for
solving set consensus among $n$ processes using the dynamic
programming approach.

The sufficiency of the condition is immediate.
Indeed, the condition implies that we can partition the set of $n$
processes in $\sum_i x_i$ groups:
$x_0$ groups of size (at most) $\ell_0$, $x_1$ groups of size (at most) $\ell_1$,
$\ldots$,
$x_m$ groups of size (at most) $\ell_m$.  
Each of the $x_i$ groups of size $\ell_i$, $i=0,\ldots,m$, can independently solve
$j_i$-set consensus using a distinct $(\ell_i,j_i)$-set-consensus
object in $C$, which gives us at most
$\sum_i j_i x_i \leq k$ different outputs in total.     

The necessity 
uses a generalized version of the BG
simulation~\cite{BG93b,BGLR01} that allows to simulate, in the
read-write shared-memory model,  a protocol that uses various types of
set-consensus objects.  
We use this simulation to show that if a collection not satisfying the condition
solves $(n,k)$-set consensus, then $k+1$ processes can solve $k$-set
consensus using read-write registers, contradicting the classical wait-free set-consensus
impossibility result~\cite{BG93b,HS99,SZ00}.        
Interestingly, the necessity of this condition holds even if we can use
read-write registers in addition to the elements in $C$.  

Thus, we derive a complete characterization of models defined by 
collections of set-consensus objects. 
In particular, it allows us to determine the \emph{$j$-set-consensus
  number} of a set-consensus collection $C$ as the maximal number of
processes that can achieve $j$-set
consensus using $C$ and read-write registers. 
Applied to arbitrary objects, this metric is a natural generalization of Herlihy's consensus
number~\cite{Her91}. 

Coming back to the collection $C=\{(2,1),(5,2)\}$, 
our characterization implies that $4$ is the best level
of set consensus that can be achieved by $9$ processes with $C$. 
Observe, however, that if only $2$ processes participate, then they can use $C$ to solve
consensus, i.e., to achieve ``perfect'' agreement. 
Applying our condition, we also see that participating sets of sizes $3$ up to $5$
can solve $2$-set consensus, participating sets of size
$6$ up to $7$ can solve $3$-set consensus,
participating sets of size $8$ up to $10$ can solve $4$-set consensus,
etc.
That is, for every given participating set, we can devise an \emph{optimal}
set-consensus algorithm that ensures the best level of
agreement achievable with~$C$. 

An immediate question is whether we could \emph{adapt} to the participation
level and ensure the best possible level of agreement in any case? 
Such algorithms are very useful in large-scale systems with
bounded contention levels.
We show that this is possible by presenting an \emph{optimally
  adaptive} set-consensus algorithm. 
Intuitively, for the currently observed participation, our algorithm 
employs the best algorithm and, in case the participating set grows,
seamlessly relaxes the agreement guarantees by switching to a possibly less
precise algorithm when there is a larger set of participants.

Our results thus imply that there is an efficient algorithm to decide whether one model defined by a collection of set-consensus object types can be implemented in model defined by another collection of set-consensus objects.
%
%
We conjecture that the ability of any ``reasonable''
(yet to be defined precisely)  
shared-memory system to solve set consensus, captured by its
$j$-set-consensus numbers, for all positive $j$,
characterizes precisely its computing power with respect to 
solving tasks or implementing deterministic objects.

This work contributes to the idea that there is
nothing special about consensus that set consensus cannot do. Indeed,
set-consensus collections are decidable in the same way collections
of consensus objects are~\cite{Her91}: the
power of a collection of consensus objects $\{(\ell_1,1),(\ell_2,1),\ldots,(\ell_m,1)\}$  to solve consensus is
determined by $\max_i \ell_i$.
Furthermore, it was recently shown that the computational power of a
class of  deterministic objects cannot be characterized by its ability
to solve consensus~\cite{AEG16}, which suggests the use of set
consensus in a characterization.  
We see this paper as the first step towards proving the conjecture
that the computational power of a deterministic object 
can be captured by its \emph{set-consensus} number,
determining the best level of agreement the object can reach for each
given system size.

\myparagraph{Roadmap.} 
The rest of the paper is organized as follows. 
In Section~\ref{sec:model}, we recall the basic model definitions and
simulation tools.  In  Section~\ref{sec:char}, 
we present and prove our characterization of
set-consensus collections, and describe an efficient
algorithm to compute the characterizing criterion.  
In Section~\ref{sec:adapt}, we present an adaptive algorithm that
achieves the optimal level of agreement for each set of active
participants having access to a given set-consensus collection.
We discuss related work in Section~\ref{sec:related} and conclude in Section~\ref{sec:conc}.

\section{Preliminaries}
\label{sec:model}

In this section, we briefly state our system model, recall the notion
of a distributed task, and sketch the basic simulation tools that we
use in the paper.

\myparagraph{Processes and tasks.}
We consider a system $\Pi$ of asynchronous processes
that communicate via shared memory abstractions.
We assume that process may only fail by crashing, and otherwise it must respect the
algorithm it is given. 
A \emph{correct} process  never crashes.
Shared abstractions we consider here include an \emph{atomic-snapshot}
memory~\cite{AADGMS93} and a collection of objects solving \emph{distributed
  tasks}~\cite{HS99}.

An atomic-snapshot memory stores a vector of $|\Pi|$ values, one
value per process, and exports atomic operations $\textit{update}$ and
$\textit{snapshot}$: operation $\textit{update}(p,v)$ performed by
process $p$  writes $v$ in position $p$ in the vector, and operation
$\textit{snapshot}()$ returns the vector.  
Atomic-snapshot memory can be implemented, in a wait-free and linearizable manner,  
in the standard read-write shared-memory model~\cite{AADGMS93}.   

A process invokes a task with an input value and the task  
returns an output value, so that the inputs 
and the outputs across the processes invoked the task respect the task
specification and every correct process that participates decides (gets an output).
More precisely, a \emph{task} is defined through a set $\I$ 
of input vectors (one input value for each 
process), a set $\O$ of output vectors (one output value for each process),
and a total relation $\Delta:\I\mapsto 2^{\O}$ that associates each input vector 
with a set of possible output vectors.
An input $\bot$ denotes a \emph{non-participating} process and
an output value $\bot$ denotes an \emph{undecided} process.

For vectors $S$ and $S'$ in $\I$ (resp., $\O$),
we write $S\geq S'$ if $S'$ is obtained from $S$
by replacing some entries with $\bot$.  
We assume that if $\I$ (resp., $\O$) contains a
vector $S$, then $\I$ (resp., $\O$)  also contains any vector $S'$
such that $S\geq S'$.  
%
We stipulate that if $(I,O)\in\Delta$, then
	(1) for all $i$, if $I[i]=\bot $, then $O[i]=\bot$,
	(2) for each $O'$, such that $O\geq O'$,  $(I,O')\in\Delta$ and,
	(3) for each $I'$ such that $I'\geq I$, there exists some $O'$ 
such that $O'\geq O$ for all $i$, if $I'[i]\neq\bot $, then $O'[i]\neq\bot$, and $(I',O')$ in $\Delta$.

An algorithm \emph{solves a task $T=(\I,\O,\Delta)$ in a wait-free manner} 
if it ensures that in every execution in which
processes start with an input vector $I\in\I$, every correct process decides,
and the set of decided values, taken together with the processes
taking these decisions, form a vector $O\in\O$ (where positions of non-decided
processes are assigned $\bot$) such that $(I,O)\in\Delta$.    

\myparagraph{The task of $k$-set consensus.}
In the task of \emph{$k$-set consensus},  
	input values are in a set of values $V$  ($|V|\geq k+1$),
	output values are also in $V$,
	and for each input vector $I$ and output vector $O$, $(I,O) \in\Delta$ if 
	the set  of non-$\bot$ values in $O$ is 
	a subset of values in $I$ of size at most $k$.
The special case of $1$-set consensus is called \emph{consensus}~\cite{FLP85}.
More generally, \emph{$(\ell,k)$-set-consensus objects} ($k\leq
\ell)$
allow arbitrary subset of $\ell$ processes to solve $k$-set
consensus.


%
Note that $k$-set consensus is an example of a \emph{colorless} task
(also known as a \emph{convergence} task~\cite{BGLR01}):  
processes are free to use each others' input and output values, so the task can be defined in terms of 
input and output \emph{sets} instead of vectors.
Formally,
let $\val(U)$ denote the set of non-$\bot$ values in a vector $U$.
In a colorless task, for all input vectors $I$ and $I'$
and all output vectors $O$ and $O'$, 
such that $(I,O)\in\Delta$, $\val(I)\subseteq\val(I')$ and
$\val(O')\subseteq\val(O)$, 
we have $(I',O')\in\Delta$.
To solve a colorless task, it is sufficient to find an algorithm 
that allows just one process to decide.
Indeed, if such an algorithm exists, we can simply convert it 
into an algorithm that allows every correct process to decide:
every process simply applies the decision function 
to the observed state of any process that has decided and adopts the decision.

In contrast, $(\ell,k)$-set consensus is not colorless in a system of 
$n>\ell$ processes, as it does not always allow a process to adopt the
decision of another process:
e.g., if a process does not belong to a set $S$ of $\ell$ processes, it cannot
provide outputs for $j$-set consensus for $S$.


\myparagraph{Simulation tools.}
%
%
An execution of a given algorithm $\A$ by the processes $p_1,\ldots p_n$  can be
\emph{simulated} by a set of  \emph{simulator} processes $s_1,\ldots,s_{\ell}$
(or, simply, simulators) that run a distributed  algorithm ``mimicking'' the steps of $\A$ in a
\emph{consistent} way.
Informally, for every execution $E_s$ of the simulation algorithm, there exists an
execution $E$ of $\A$ by $p_1,\ldots,p_n$ such
that the sequence of states simulated for every process $p_i$ in $E_s$ 
is observed by $p_i$  in $E$.


%
A basic building block of our simulations is an \emph{agreement}
protocol~\cite{BG93b,BGLR01} that can be seen as a safe part of consensus.
It exports one operation
$\id{propose}()$ taking $v \in V$ as a parameter and returning $w\in
V$, where $V$ is a (possibly infinite) \emph{value set}. 
When a process $p_i$ invokes $\id{propose}(v)$ we say that $p_i$
\emph{proposes} $v$, and when the invocation returns $v'$ we say that 
$p_i$ \emph{decides on} $v'$.
Agreement ensures four properties: 
\begin{enumerate}
\item[(i)] every decided value has been previously
proposed,
\item[(ii)] no two processes decide on different values, and 
\item[(iii)] if every participating process takes enough steps  then
eventually every correct participating process decides.
\end{enumerate}
Here a process is called participating if it took at least one step in
the computation.
In fact, the agreement protocol in~\cite{BG93b,BGLR01} ensures that 
if every participating process takes  at least three
shared memory steps then
eventually every correct participating process decides.
If a participating process fails in the middle of an agreement
protocol, then no process is guaranteed to return. 

A generalized version of the agreement protocol,
\emph{$\ell$-agreement}~\cite{BG93-TR,ChaR96}, relaxes safety properties of agreement
but improves liveness. Formally, in addition to (i) above, $\ell$-agreement ensures:
\begin{enumerate}
\item[(ii$'$)] at most $\ell$ different values can be decided, and 
\item[(iii$'$)] every correct participating process is guaranteed to decide, unless $\ell$ or more participating
processes do not take enough steps.
\end{enumerate}
Clearly, the agreement protocol we defined above is $1$-agreement.
An $\ell$-agreement protocol with a proof (only sketched
in~\cite{BG93-TR,ChaR96}) can be found in Reiners' thesis~\cite{Rei96}.
For completeness, given that the thesis is not easy to find, we
present the proof in Appendix~\ref{app:lsa}.

\section{A characterization of set-consensus collections}
\label{sec:char}

In this section, we introduce the notion of \emph{agreement level} for a given
set-consensus collection $C$ and a given system size.
Then we show that the metrics captures the power of
$C$ for solving set consensus.  
Then we show how to efficiently compute the agreement level of a
given collection.

\subsection{Agreement levels of $C$}

Consider a model in which processes can communicate via
an atomic-snapshot memory and set-consensus objects from a collection
$C$. 
For brevity, we represent $C$ as a set 
$\{(\ell_0,j_0)$, $(\ell_1,j_1)$, $\ldots$, $(\ell_m,j_m)\}$ such that  
for each $i=0,\ldots,m$, the task of $(\ell_i,j_i)$-set consensus can
be solved ($\ell_i\geq j_i$). 


By convention, we assume that $(\ell_0,j_0)=(1,1)$ is always contained
in a collection $C$: $(1,1)$-set consensus
is trivially solvable.
Note that $(\ell,j)$-set consensus also solves 
$(\ell',j')$-set consensus  for all $\ell'\leq\ell$ and $j'\geq j$.
Thus, without loss of generality, we can assume that the sequence
$(\ell_1,j_1),\ldots,(\ell_m,j_m)$ is
monotonically increasing:  $\ell_0<\ell_1$ and for all $i=1,\ldots,m-1$,
$\ell_i<\ell_{i+1}$ and $j_i<j_{i+1}$.
(since we required that
$(\ell_0,j_0)=(1,1)$, there can be two elements of
the type $(-,1)$).
In particular, for all $n$,  $C$ contains at most $n$ elements $(\ell,j)$ such
that $\ell\leq n$.

%
%


\begin{definition}\label{def:setcon}[Agreement level] Let
  $C=\{(\ell_0,j_0),(\ell_1,j_1),\ldots,(\ell_m,j_m)\}$
 be a collection  of set-consensus objects.
The \emph{agreement level for  $n$ processes of $C$}, denoted
$\AL_n^C$, is defined as:

\begin{itemize} 

\item $\min \sum_i j_i x_i$

\item under the constraints: $\sum_i
\ell_i x_i  \geq n$, $x_0,\ldots,x_m\in \{0,\ldots,n\}$ 

\end{itemize}  

\end{definition}
One can also interpret $\AL_n^C$ as the lowest $k$ for which there exists a 
 \emph{multiset} $S=\{(t_1,s_1)$, $\ldots$, $(t_p,s_p)\}$ of elements in $C$
 such that $\sum_{i}s_i = k$ and $\sum_{i} t_i \geq
 n$.\footnote{Note that assuming that $(1,1)\in C$
 implies $\AL_n^C\leq n$.} 

\subsection{Agreement levels and set consensus}

We now can define a simple criterion to determine whether the model
defined by $C$ can solve
$(n,k)$-set consensus. The criterion is \emph{sufficient}, i.e.,
every model equipped with $C$ that satisfies the criterion solves
$(n,k)$-set consensus, and
\emph{necessary}, i.e., every model equipped with $C$ that solves $(n,k)$-set
consensus satisfies the criterion.
 
\begin{theorem}\label{th:setcon}
$(n,k)$-set consensus can be solved using read-write registers and
 any number of objects taken in a set-consensus collection $C$ if and only if $\AL_n^C\leq k$. 
\end{theorem}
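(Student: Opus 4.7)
The plan is to prove both directions; sufficiency is constructive, while necessity proceeds by contradiction using the generalized BG simulation.

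\textbf{Sufficiency.} I would start from witnesses of $\AL_n^C\le k$: by Definition~\ref{def:setcon} there exist $x_0,\ldots,x_m\in\Nat$ with $\sum_i \ell_i x_i\ge n$ and $\sum_i j_i x_i\le k$. I partition the $n$ processes into $\sum_i x_i$ disjoint groups, $x_i$ of them of size at most $\ell_i$, and assign a fresh $(\ell_i,j_i)$-set-consensus object from $C$ to each group of size $\le\ell_i$. Every process proposes its input to its group's object and decides on the returned value. Each such object contributes at most $j_i$ distinct outputs, so the total number of distinct decisions is at most $\sum_i j_i x_i\le k$, which is exactly what $(n,k)$-set consensus requires.

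\textbf{Necessity.} For the converse I would argue the contrapositive: assuming $\AL_n^C\ge k+1$, no algorithm $A$ using $C$ and registers solves $(n,k)$-set consensus. Suppose such $A$ existed; the plan is to build, on top of $A$, a pure read-write protocol in which $k+1$ processes solve $k$-set consensus, contradicting the classical impossibility~\cite{BG93b,HS99,SZ00}. The construction instantiates the generalized BG simulation referenced in Section~\ref{sec:model}: $k+1$ simulators simulate the $n$-process execution of $A$, resolving each simulated read-write step via $1$-agreement (safe consensus) among the simulators that reach it, and each simulated invocation of an $(\ell_i,j_i)$-set-consensus object via $j_i$-agreement among the simulators simulating its participants. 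A simulator outputs the first simulated decision it observes.

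The main obstacle, and the technical heart of the argument, is the progress accounting for this simulation. The key invariant I would prove is that a $j_i$-agreement instance can permanently detain at most $j_i$ simulators while stalling at most $\ell_i$ simulated processes, because the $j_i$-agreement protocol of~\cite{BG93-TR,ChaR96,Rei96} is guaranteed to terminate as soon as strictly fewer than $j_i$ of its participants are slow. Suppose for contradiction that no simulated process ever decides. Then every one of the $n$ simulated processes is stalled on some blocking instance, and if $x_i$ counts the blocking instances of type $(\ell_i,j_i)$ used by the stalled execution, we have $\sum_i \ell_i x_i\ge n$ and the total number of detained simulators is $\sum_i j_i x_i$. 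The hypothesis $\AL_n^C\ge k+1$ forces $\sum_i j_i x_i\ge k+1$, so all $k+1$ simulators are detained; but the standard BG invariant (generalized to heterogeneous $j_i$-agreements) guarantees that with $k+1$ simulators at least one is always free to drive some unblocked step, contradicting the exhaustive blocking. Hence some simulated process decides, the corresponding value is adopted by a simulator, and by correctness of $A$ the set of simulated decisions across the execution has size at most $k$. The $k+1$ simulators therefore solve $k$-set consensus using only read-write memory, the desired contradiction. The sufficiency partition is routine; formalizing the generalized BG simulation and turning the blocking/detention trade-off into a rigorous accounting is where the real work lies.
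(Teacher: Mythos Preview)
Your proposal follows the same strategy as the paper: the sufficiency direction via partitioning is identical, and the necessity direction---$k+1$ simulators run a generalized BG simulation of $A$, and a Knapsack-style count of blocked agreement instances against faulty simulators yields the contradiction---is exactly the paper's argument, phrased contrapositively.

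Two points need tightening. First, your key invariant has the inequality backwards: a blocked $j_i$-agreement instance detains \emph{at least} $j_i$ simulators, not at most, since by property~(iii$'$) the protocol terminates whenever fewer than $j_i$ participants are slow. Your subsequent counting (``$\sum_i j_i x_i\ge k+1$, so all $k+1$ simulators are detained'') in fact uses the correct direction, so this is a wording slip, but the stated invariant as written does not support the conclusion. Second, using $j_i$-agreement alone to simulate an $(\ell_i,j_i)$-set-consensus access is not enough: $j_i$-agreement may legitimately return up to $j_i$ distinct values to different simulators, so without a follow-up $1$-agreement per simulated participant the simulators would disagree on that process's next state and the simulated execution would be incoherent. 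The paper therefore composes each $s$-agreement with a subsequent $1$-agreement; the latter is absorbed into the $(1,1)$ term of the blocking account (one stalled simulated process per one faulty simulator), so the Knapsack inequality is unaffected, but the step is not optional for correctness of the simulation.
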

\begin{proof}
Suppose that $\AL_n^C\leq k$.
Thus, there exists a multiset $S=\{(t_1,s_1),\ldots,(t_p,s_p)\}$ of elements in $C$
such that $\sum_{i}s_i \leq k$ and $\sum_{i} t_i \geq n$. 
We show how $n$ processes can solve $k$-set consensus using $S$.  
Every $p_i$, $i=1,\ldots,n$, is assigned to the element $(t_j,s_j)\in S$ such that 
$\sum_{\ell=1,\ldots,j-1} t_{\ell} < i \leq  \sum_{\ell=1,\ldots,j}t_{\ell}$, invokes
the assigned object of $(t_j,s_j)$-set consensus with its input and
returns the corresponding output.
Since  $\sum_{i}s_i \leq k$, the total number of outputs does not
exceed $k$.

Now suppose that $C$ can be used to solve $(k,n)$-set consensus and
let $A$ be the corresponding algorithm.
By contradiction, suppose that no multiset $S$ satisfying the
conditions above exists for $C$. 
Thus, for any multiset $\{(t_1,s_1),\ldots,(t_p,s_p)\}$ of
elements in $C$ such that $\sum_{i}s_i \leq k$,  we have $\sum_{i}
t_i<n$. 

We show that we can then use a simulation of $A$ to solve
$(k+1,k)$-set consensus
using only read-write memory, contradicting the classical
impossibility result~\cite{BG93b,HS99,SZ00}.
The simulation we describe below is an extension of the \emph{BG
simulation}~\cite{BG93b,BGLR01},
inspired by the algorithms described in~\cite{BG93-TR,ChaR96}.  

\vspace{2pt}\noindent \textit{Simulation.}
Let $q_1,\ldots,q_{k+1}$ be a set of $k+1$ simulator processes
communicating via an atomic-snapshot memory.
In its position in the snapshot memory, every simulator $q_i$ maintains its estimate of the
current simulated state of every simulated process in
$\{p_1,\ldots,p_n\}$.

Note that the state of each $p_{\ell}$ (in algorithm $A$)
unambiguously determines the next step that 
$p_{\ell}$ is going to take in the
simulation, which can be an update operation, a snapshot operation, or an access to
a $(t,s)$-set-consensus object.
Since  each update operation by $p_{\ell}$ is implicitly simulated
by registering the last simulated state of $p_{\ell}$ in the shared memory, the simulators only need
to explicitly simulate snapshot operations and accesses to
set-consensus objects.  

We associate each state of $p_{\ell}$ (assuming distinct local states)
with a distinct agreement
protocol (cf. Section~\ref{sec:model}), depending on the next step $p_{\ell}$ is going to take in
that state:

\begin{itemize}

\item For a snapshot operation, we use one instance of the agreement ($1$-agreement)
algorithm. 

\item For an access to a $(t,s)$-set-consensus object, we use one
  instance of
  $s$-agreement 
  and one instance of $1$-agreement.

\end{itemize}

The initial state of each simulated process is associated with a
$1$-agreement protocol.    

The simulation proceeds in asynchronous rounds.
In each round, a simulator $q_i$ picks up the next
simulated process $p_{\ell}$ in a round-robin fashion.  
To simulate a step of  $p_{\ell}$, $q_i$ takes a snapshot of the
memory and computes $p_{\ell}$'s latest simulated state by choosing the latest
simulated state of $p_{\ell}$ found in the snapshot.    

If $p_{\ell}$ is in the initial state, $q_i$ invokes the 
agreement protocol ($1$-agreement) to compute the input of $p_{\ell}$ in
the simulated run, using its input value (for $k$-set consensus)
as a proposed value.  Otherwise, $q_i$ invokes the corresponding agreement protocol:

\begin{itemize}

\item To simulate a snapshot operation, $q_i$ invokes the
  corresponding $1$-agreement protocol,  proposing the just read
  simulated system state (the vector of  the latest simulated
  states of processes $p_1,\ldots,p_n$) as the outcome of the simulated snapshot. 

Recall that a simulator that has started but not finished the
$1$-agreement protocol for a given snapshot operation may block the simulated process
forever. However,  since the faulty simulator may be involved in at most one
agreement protocol at a time, it can block at most one simulated process.  
  
\item To simulate an access of a $(t,s)$-set-consensus object, the
  simulator invokes the corresponding $s$-agreement protocol
  proposing $p_{\ell}$'s input value for this object (according to the simulated
  state) as the decided value.  

Recall that an $s$-agreement protocol may block forever if $s$ or more
processes fail in the middle of its execution.
Thus, when it is used to simulate an access to $(t,s)$-set consensus,
failures of $s$ or more simulators may block $t$ simulated processes.

Also, recall that $s$-agreement may return different values to
different simulators (as long as there are at most $s$ of them).   
To ensure that the outcome of  each of the $t$ simulated processes accessing the
$(t,s)$-set-consensus object is determined consistently by different
simulators, the outcome of the simulated step is then agreed upon using
$1$-agreement.  

\end{itemize}

If an agreement protocol for process $p_{\ell}$ blocks, simulator $q_i$ proceeds to the next
non-blocked simulated process in the round-robin order. 
If the corresponding agreement protocol terminates, the simulator
updates the atomic-snapshot memory with its estimation of the
simulated states of $p_1,\ldots,p_n$, where the new state of $p_i$ is
based on the outcome of the agreement.        

\vspace{2pt}\noindent \textit{Correctness.}
The use of $1$-agreement protocols for both kinds of simulated
operations implies that every step is simulated
consistently, i.e., the simulators agree on the next simulated state of each
process in $\{p_1,\ldots,p_{n}\}$.

The proposal to each of these agreement protocols is either
the recently taken snapshot of the simulated system state
(in case a snapshot operation is simulated) or the value that
the simulated process must propose based on its state (in case
an access to   a $(t,s)$-set-consensus object is simulated).
The initial state of each simulated process is an (agreed upon)
input value of a participating simulator.    

Each simulated snapshot is computed based on the most recent simulated
states of $p_1,\ldots,p_{n}$ contained in the snapshot taken by the
simulator ``winning'' the corresponding $1$-agreement. 
The use of  $s$-agreement in simulating accesses to a $(t,s)$-set-consensus object ensures that the simulated accesses return at most $s$
proposed values.  
Thus, starting from the initial states of the simulated processes, we
inductively derive that all states that appear in the simulated run are \emph{compliant} with a
run $E$ of $A$: in $E$, each process $p_i$ goes through the sequence
of states that are agreed upon for $p_i$ in the simulation.

\vspace{2pt}\noindent \textit{Progress.}
It remains to show that at least one process in
$\{p_1,\ldots,p_{n}\}$ \emph{makes progress} in the simulated run, 
assuming that at least one of the $k+1$ simulators is correct. 
Consider any simulated run.
We show that in this run, at least one of the 
simulated processes takes sufficiently many 
simulated steps (for producing an output
for $k$-set consensus).

A simulated process may stop making progress only if an agreement
protocol used for simulating its step blocks, which may happen only if
a certain number of simulators stopped taking steps in
the middle of the protocol.

Suppose that at most $k$ simulators are faulty.
Given that a faulty simulator can block at most one agreement
protocol, we can identify the set of distinct agreement protocols
$A_1\ldots, A_p$ that are blocked in our run,
and for all $j=1,\ldots,p$, let $A_i$ be $s_i$-agreement.  
We also identify $p$ subsets of $k$ faulty simulators of sizes
$s_1,\ldots,s_p$, where $s_i$ is the number of simulators that block
$A_i$.

For each $i=1,\ldots,p$, let $t_i$ denote the number of simulated processes that
are blocked because of $A_i$.
If $A_i$ is an instance of $1$-agreement ($s_i=1$) used to simulate a snapshot
operation, to agree on the input of a given process,
or to agree on the output of a set-consensus object at a
given process,
then we set $t_i=1$ (only the corresponding simulated process can be
blocked).
Otherwise, $A_i$ is an instance of $s_i$-agreement
used to simulate an access to some $(f_i,s_i)$-set consensus,
and we set $t_i=f_i$ (up to $f_i$ processes accessing the
$(f_i,s_i)$-set-consensus object can be blocked).

Since there are at most $k$ faulty simulators, we get a multiset
$\{(t_1,s_1),\ldots,(t_p,s_p)\}$ of elements in $C$ such that 
$\sum_{i}s_i \leq k$.
But then, by our contradiction hypothesis, we have $\sum_{i} t_i<n$,
i.e., the total number of blocked simulated processes is less than $n$. 
Thus, at least one of the $n$ processes $p_1,\ldots,p_n$ makes progress in the simulated run and
eventually decides. 
Assuming that the first simulator to witness a decision in the
simulated run writes it in the shared memory, we derive that  
every correct simulator eventually reads some decided value and decides. 

Since all these values are coming from a run of an algorithm solving
$(n,k)$-set consensus, there are at most $k$ distinct decided values.
Each of the decided values is an input of some simulator. 
Thus, $k+1$ simulators solve $k$-set-consensus using reads and writes---a contradiction. 
\end{proof}



\subsection{Computing the power of set-consensus collections}
Having characterized the power of a collection $C$ to
solve set-consensus, we are now faced with the question of how to
compute this power. 
                                                                                      
By Theorem~\ref{th:setcon}, determining the best level of agreement
that can be achieved by $C=\{\ell_0,j_0)$, $\ldots$, $(\ell_m,j_m)\}$ in a system of $n$ processes is equivalent
to finding $\min \sum_i j_i x_i$, under the constraints: $\sum_i
\ell_i x_i  \geq n$,  $x_0,x_1,\ldots,x_m\in \{0,\ldots,n\}$.  
This can be viewed as a variation of the Knapsack optimization problem~\cite{knapsack}, 
where we aim at minimizing  the total weight of a set of items from $C$
put in a knapsack, while maintaing a predefined minimal total value of the
knapsack content.\footnote{The classical Knapsack optimization problem
  consists in maximizing the total value, while maintaining the total weight within a given bound.}    
Here each $j_i$ serves as the ``weight''
of an element in $C$, i.e., how much disagreement it may incur, and each
$\ell_i$ serves as its ``value'', i.e., how many processes it is able to
synchronize. 
We use this observation to derive an algorithm to 
compute $\AL_n^C$ in $O(n^2)$ steps.
                                                                                                                                                                                                                                                                                                                                        
Recall that $C$ is represented as a monotonically
increasing sequence $(\ell_0,j_0)$, $\ldots$, $(\ell_m,j_m)$.
First we \emph{complete} $C$  for the fixed system size $n$:
for each $i=1,\ldots,m$, such that $j_i<n$ , we insert elements
$(\max(j_{i}+1,\ell_{i-1}+1),j_{i})$,$(\max(j_{i}+1,\ell_{i-1}+1)+1,j_i)$,
$\ldots$, $(\min(\ell_i,n)-1,j_i)$, $(\min(\ell_i,n),j_i)$.
For example, the completion of $C=\{(1,1),(3,2), (10,6)\}$ for $n=11$ would be 
$\{(1,1)$,  $(3,2)$, $(7,6)$, $(8,6)$, $(9,6)$,
$(10,6)\}$.  
Notice that since $(\ell_0,j_0)$, $\ldots$, $(\ell_m,j_m)$ is
monotonically increasing, such a  completion can be performed in $O(n)$ steps,
and the resulting sequence 
is also monotonically increasing.

As a result of the completion, for every $r=1,\ldots,n$, and each    
element of the kind $(\ell,j)\in C$ such that $\ell>r$ and $j<r$
we have a new element $(r,j)$.
As we will see below, this allows us to compute $\AL_r^C $ in $O(r^2)$ steps.     

%
We observe that for all $r=1,\ldots,n$, $\AL_r^C= \min_{\ell_i\leq r} (j_i+\AL_{r-\ell_i}^C)$.
Indeed, for all $(\ell_i,j_i)$ such that $\ell_i\leq r$, it must hold
that  $j_i+\AL_{r-\ell_i}^C \geq \AL_{r}$, otherwise,
$(\ell_i,j_i)$ plus the multiset $(t_1,s_1),\ldots,(t_p,s_p)$ of elements in $C$ that reaches
$\AL_{r-\ell_i}$ would give $j_i+\sum_v  s_v < \AL_r^C$ and
$\ell_i+\sum_v t_v \geq \ell_i + r -\ell_i = r$, contradicting the
definition of  $\AL_r^C$.
Further, since $C$ is complete, for each multiset
$(t_1,s_1),\ldots,(t_p,s_p)$ in $C$
reaching $\AL_{r}^C$, we can construct a multiset
$(\min(t_1,r),t_1),\ldots,(\min(t_p,r),s)$ in $C$  (each set-consensus object
in the multiset is defined for at most $r$ processes) that also reaches
$\AL_{r}^C$.
Hence, $\AL_r^C= \min_{\ell_i\leq r} (j_i+\AL_{r-\ell_i}^C)$

Thus, we can use the following simple iterative algorithm (a variant of
a solution to the  Knapsack optimization problem based on dynamic
programming) to compute $\AL_n^C$ in $O(n^2)$ steps:

\begin{tabbing}
 bbb\=bb\=bb\=bb\=bb\=bb \kill
 \>$\AL_0^C=0$;\\
 \> \textbf{for} $r=1,\ldots,n$ \textbf{do} $\AL_r^C= \min_{\ell_i\leq r} (j_i+\AL_{r-\ell_i}^C)$.  
\end{tabbing}

In each iteration $r=1,\ldots,n$ of the algorithm above,  
we perform at most $r$ checks, which gives us $O(n^2)$ total complexity.

We can also consider a related notion of \emph{$j$-set-consensus
  number} of $C$, denoted $\SCN_j^C$ and  
defined as the maximal number of processes that can achieve $j$-set
consensus using $C$ and read-write registers: $\SCN_j^C=\max_{\AL_n^C\leq j} n$. 
This is a natural generalization of Herlihy's consensus
power~\cite{Her91}.  
Note that the problem of computing $\SCN_j^C$ is the classical
Knapsack optimization problem, and using a variation of  the algorithm above
we can do it in $O(j|C|)$ steps (see, e.g., \cite[Chap. 5]{knapsack}).     

\section{An adaptive algorithm: reaching optimal agreement}
\label{sec:adapt}

%

Theorem~\ref{th:setcon} implies that, for every fixed $n$,  there exists an $\AL_n^C$-set-agreement algorithm
$\S\T_n^C$ ($\S\T$ for \emph{static}) using $C$. 
We show that these algorithms can be used in an
\emph{adaptive} manner, so that for each set of participating
processes, the best possible level of agreement can be
achieved.

To understand the difficulty of finding such an adaptive algorithm, 
consider $C=\{(1,1)$, $(13,5)$, $(20,9)\}$. 
For selected sizes of participating sets $m$, the table in Figure~\ref{tab:example} gives 
$\AL_m^C$,  and lists the elements of $C$ used in the corresponding $\S\T_m^C$.

%
%


\begin{figure}[thp]
 {\small
\begin{center}
\begin{multicols}{2}
\begin{tabular}{|c|c|p{4cm}|}
\hline
&&\\[-3mm]
$m$&$\AL_m^C$ & $\S\T_m^C$ \\[1mm]

\hline
\hline
1&1&(1,1) \\
2&2&(1,1) (1,1)\\
3 &3&(1,1) (1,1) (1,1)\\
4&4&(1,1) (1,1) (1,1) (1,1)\\
5  to 13&5&(13,5)\\
14 &6&(13,5) (1,1)\\
15 &7&(13,5) (1,1) (1,1)\\
\hline
\end{tabular}
\columnbreak

\hspace{3mm}
\begin{tabular}{|c|c|p{4cm}|}
  \hline
&&\\[-3mm]
$m$&$\AL_m^C$ & $\S\T_m^C$ \\[1mm]

\hline
\hline
16&8&(13,5)(1,1)(1,1)(1,1) \\
17  & 9&(13,5)(1,1)(1,1)(1,1)(1,1) or (9,20)\\
18 to 20  &9&(20,9)\\
21 &10&(20,9)(20,9) or (13,5)(13,5)\\
22 to 26&10& (13,5) (13,5)\\
...&&\\
\hline
\end{tabular}
\end{multicols}
\end{center}
\caption{Selecting elements in $C=\{(1,1)$, $(13,5)$, $(20,9)\}$ 
to solve $\AL_m^C$-set consensus.}
\label{tab:example}
}
\end{figure}

If we have $16$ processes, $\S\T_{16}^C$ uses one instance of
$(13,5)$-set consensus  and three instances of $(1,1)$-set consensus  to achieve  $\AL_{16}^C =9$.
But if two new  processes arrive
we need $(20,9)$-set consensus to achieve $\AL_{18}^C  =9$.
Interestingly,  to achieve $\AL_{22}^C  =10$, we should abandon
$(20,9)$-set consensus
and use  two instances of $(13,5)$-set consensus instead. 
In other words, we cannot simply add a set-consensus instance of the
species we used before to account for the arrival of new processes.
Instead, we have to introduce a new species.

We present a wait-free adaptive algorithm that ensures that if
the set of participating processes is of size $m$, then at most $\AL_m^C$ distinct
input values can be output.  We call such an algorithm \emph{optimally
  adaptive for $C$}.

\vspace{1em}
\begin{figure*}[hbt!]
\hrule
{
\small
\setcounter{linenumber}{0}
\begin{tabbing}
bbb\=bbb\=bbb\=bbb\=bbb\=bbb\=bbb\=bbb\=bbb\=bbb\=bbb\=bbb\=  \kill\\
\emph{Shared objects:} \\ [2mm]
\>\> $R$: \texttt{snapshot object, storing pairs}  $(\textit{value},  \textit{level})$,  \texttt{ initialized to}  $(\bot,\bot)$ \\[3mm]


\emph{Local variables for process $p\in\Pi$:}\\[1mm]
\>$r[1,\ldots,n]$:  \texttt{ array of pairs}  $(\textit{value},  \textit{level})$ \\
\> $\textit{prop},v$: $value$  \\ 
\> $\textit{parts}, P \in 2^{\Pi}$ \\ 
\>$index$: integer\\[3mm]

\emph{Code for process $p\in\Pi$ with proposal $v_p$:}\\[1mm]

\nnll\label{update1}\>$R.\textit{update}(p, (v_p,0))$ \\[1mm]
\nnll\label{scan1}\>$r[1,\ldots,n]=R.\textit{snapshot}()$\\[1mm]
\nnll\label{s1}\>$P$=set of processes $q$ such that $r[q]\neq (\bot,\bot)$\\[1mm]


\nnll\label{nb:repeat}\>\textbf{repeat}\\
\nnll\label{s6}\>\>$\textit{parts} := P$\\[1mm]
\nnll\label{s7}\>\>$\textit{rank} :=$ the rank of $p$ in $\textit{parts}$\\[1mm]
\nnll\label{s9}\>\> $k := $ be the greatest integer such that $(-,k)$ is in $r$\\[1mm]
\nnll\label{s10}\>\> $v :=$ be any value such that $(v,k)$ is in $r$\\[1mm]
\nnll\label{s8}\>\>$\textit{prop} := {\S\T}_{|\textit{parts}|}^C$
with value $v$ at position  $\textit{rank}$\\[1mm]
\nnll\label{update2}\>\>$R.\textit{update}(p, (\textit{prop},{|\textit{parts}|}))$\\[1mm]
\nnll\label{scan2}\>\>$r[1,\ldots,n]=R.\textit{snapshot}()$\\[1mm]
\nnll\label{s5}\>\>$P$=set of processes $q$ such that $r[q]\neq (\bot,\bot)$\\[1mm]
\nnll\label{nb:until}\>\textbf{until} $\textit{parts}=P$\\[1mm]
\nnll\label{nb:exit} \> \textbf{return} $prop$

\end{tabbing}}
\hrule
\caption{An optimally  adaptive set-consensus algorithm.} 
\label{fig:adapt}
\end{figure*}

The algorithm is presented in Figure~\ref{fig:adapt}. 
The idea is the following: periodically, every process $p$ writes its
current value (initially, its input), together
with the number of processes it has seen participating so far
(initially, $0$)  in the shared memory, and takes a snapshot to get the current set $P$  of participating
processes and their inputs.

Process $p$ then computes its \emph{rank} in $P$ and adopts the value $v$
from a process announcing the largest participating set. 
The chosen input is then proposed to an instance of algorithm $\S\T_{|P|}^C$, where 
$p$ behaves as the process at position \textit{rank} and proposes value $v$.
More precisely, $\S\T_{|P|}^C$ is treated as an algorithm for processes
$q_1,\ldots,q_{|P|}$ and, thus, $p$ runs the code of process
$q_{\textit{rank}}$ in the algorithm with input value $v$.       

Note that since the set is derived from an atomic snapshot of the memory, the
notion of the largest participating set is well-defined: the snapshots
of the same size are identical. 
Therefore, at most $|P|$ processes participate in $\S\T_{|P|}^C$ and
each of these $|P|$ processes can only participate at a distinct
position corresponding to its rank in $P$.
As a result, every correct process invoking $\S\T_{|P|}^C$ will
eventually get an output of the ``best'' set-consensus algorithm for $P$.   

When the participating set $P$ observed by $p$
does not change in two consecutive
iterations, $p$ terminates with its current value.

\begin{theorem}\label{th:adapt}
Let  $\cal C$ be a set-consensus collection,  $n$ be an integer.
The algorithm in Figure~\ref{fig:adapt} is optimally adaptive for
$\cal C$ in a system of $n$ processes.
\end{theorem}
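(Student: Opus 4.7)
The plan is to verify three properties of the algorithm in Figure~\ref{fig:adapt}: wait-freedom, validity, and the adaptive agreement bound $|D|\le\AL_m^C$, where $D$ is the set of decided values and $m$ the number of participants. Wait-freedom is almost immediate: the snapshot-derived set $P$ observed by any correct process grows monotonically across iterations (writes to $R$ persist) and is bounded by $n$, hence stabilizes, so once two consecutive snapshots agree the exit condition fires. Moreover, atomic-snapshot linearizability guarantees that any two snapshots of the same size coincide as sets, so each $\S\T_m^C$ is called by at most $m$ distinct processes using pairwise distinct ranks drawn from the unique participating set of size $m$; since $\S\T_m^C$ is wait-free, each correct caller returns. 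Validity then follows by a short induction on loop iterations: every value at a positive level in $R$ is an $\S\T$-output on adopted inputs which trace back through the adoption chain to some initial $R.\textit{update}(p,(v_p,0))$, and $\S\T$-outputs lie among $\S\T$-inputs by set-consensus.

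For the adaptive bound, let $m_\min\le m$ be the smallest $|\textit{parts}|$ at which any process exits (if no process exits, $D=\emptyset$ and we are done) and let $p_\min$ be such an exiter. Write $O_k$ for the set of values ever stored in $R$ at level $k$; by set-consensus of $\S\T_k^C$, $|O_k|\le\AL_k^C$. Since every decision is some $O_k$-value with $k\ge m_\min$, and since $\AL_{\cdot}^C$ is non-decreasing in its index by Definition~\ref{def:setcon}, the theorem reduces to the key containment
\[
  O_k\subseteq O_{m_\min}\qquad\text{for every write level }k\ge m_\min,
\]
which then yields $|D|\le|O_{m_\min}|\le\AL_{m_\min}^C\le\AL_m^C$.

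I would prove the containment by strong induction on $k$, combined with a temporal sub-lemma. The sub-lemma asserts that any process $r$ whose iteration has $|\textit{parts}|=k>m_\min$ has its snapshot linearized after $p_\min$'s level-$m_\min$ write: because observed participating sets form a chain under snapshot linearizability and $|P_k|>|P_{m_\min}|$, $r$ sees some $p'\in P_k\setminus P_{m_\min}$ whose initial write must follow $p_\min$'s second snapshot (otherwise $p_\min$ would not have exited at $m_\min$), which itself follows $p_\min$'s level-$m_\min$ write. The maximum level $k_r$ in $r$'s snapshot then satisfies $m_\min\le k_r\le k$; the upper bound holds because a write at level $k''>k$ would force its writer's snapshot to have strictly more participants than $r$'s while linearizing before $r$'s snapshot, contradicting the snapshot chain. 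A secondary induction on the temporal order of proposals and writes at level $k$ then shows that every input to $\S\T_k^C$ lies in $O_{m_\min}$: if $k_r<k$ the outer IH gives the adopted value in $O_{k_r}\subseteq O_{m_\min}$, while if $k_r=k$ the adopted value was written at level $k$ at an earlier moment and is handled by the inner IH. Set-consensus then yields $O_k\subseteq O_{m_\min}$. The main obstacle I expect is organizing this nested induction on $(k,\text{time at level }k)$ so that the chain of level-$k$ self-adoptions is seen to bottom out, together with formalizing the snapshot-chain sub-lemma; the remaining pieces are routine.
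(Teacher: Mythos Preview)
Your proposal is correct and follows the same strategy as the paper: after routine termination and validity arguments, both reduce the adaptive bound to the key containment $O_k\subseteq O_{m_{\min}}$ for all $k\ge m_{\min}$, using the same temporal sub-lemma that any snapshot of size $>m_{\min}$ is linearized after $p_{\min}$'s level-$m_{\min}$ write. The only difference is packaging---the paper argues by a single minimal counterexample (the first write at a level $>m_{\min}$ carrying a value outside $O_{m_{\min}}$) where you use a nested induction on $(k,\text{time})$---but these are equivalent framings of the same argument, and your version is in fact more explicit about tracing an $\S\T$-output back through validity to an earlier proposer's adoption.
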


\begin{proof}  
We show first that every correct
process eventually returns a value, and any returned value is a
proposed one.

Let $p$ and $q$ take snapshots (Lines~\ref{scan1} or~\ref{scan2}) in
that order, and let  $P_p$  and $P_q$ be, respectively, the returned
participating sets. 
We observe first that $P_p\subseteq P_q$. 
Indeed, each position in the snapshot object $R$ is initialized to
$(\bot,\bot)$. Once, $p$ updates $R[p]$ with its value and
participation level, the position remains non-$\bot$ forever.
Thus, if $q$ takes its snapshot of $R$ after $p$, then $P_p$, the set
of processes whose positions are non-$\bot$ in the resulting
vector, is a subset of $P_q$.    

Therefore, the sets $P$ and $\textit{parts}$ evaluated by $p$ in Line~\ref{scan2}
are non-decreasing with time. 
Since $\S\T_{|\textit{parts}|}^C$ is wait-free, the only reason for a
correct process $p$ not to return is to find that
$\textit{parts}\subsetneq P$ in Line~\ref{nb:until} infinitely often, i.e.,
both $P$ and  $\textit{parts}$ grow indefinitely. 
But the two sets are bounded by the set $\Pi$ of all processes---a contradiction.

Furthermore, every returned value is a value decided in an instance of
$\S\T_{|\textit{parts}|}^C$. 
But every value proposed to algorithm  $\S\T_{|\textit{parts}|}^C$ was
previously read in a non-$(\bot,\bot)$ position of $R$, which can only
contain an input value of some process.

Hence, every correct process eventually returns a value, and any returned value is a
proposed one.

Now consider a run of the algorithm in Figure~\ref{fig:adapt} in which $m$ processes participate. 
We say that a process $p$ \emph{returns at level $t$} in this run if
it outputs (in Line~\ref{nb:exit}) the value $\textit{prop}$ returned
by the preceding invocation of $\S\T_t^C$ (in Line~\ref{s8}).
By the algorithm, if $p$ returns at level $t$, then the set $\textit{parts}$ of
processes it witnessed participating is of size $t$.

Let $\ell$ be the smallest level ($1\leq \ell\leq n$) at which some
process returns, and let $O_{\ell}$ be the set of values ever written
in $R$ at level $\ell$, i.e., all values $v$, such that $(v,\ell)$
appears in $R$.

We show first that for all $\ell'>\ell$, if $R$ contains $(v',\ell')$, then $v'\in
O_{\ell}$.
 
By contradiction, suppose that some process $q$ is the first process to write a value $(v',\ell')$ (in
Line~\ref{update2}), such
that $\ell'>\ell$ and $v'\not\in O_{\ell}$, in $R$.
Thus, the immediately preceding snapshot taken by $q$ before this write  (in
Lines~\ref{scan1} or~\ref{scan2}) witnessed a participating set of
size $\ell'$. Hence, the snapshot of $q$ occurs after the last snapshot (of size
$\ell<\ell'$) taken by any process $p$ that returned at level $\ell$.
But immediately before taking its last snapshot, every such process
$p$ has written $(v,\ell)$ in $R$ (Line~\ref{update2}) for some $v\in O_{\ell}$.
Thus $q$ must see  $(v,\ell)$ in its snapshot of size $\ell'$ and,
since, by the assumption, the snapshot contains no values written at
levels higher than $\ell$, $q$ must adopt some value written at level
$\ell$ (Lines~\ref{s9} and~\ref{s10}). Thus, $v'\in O_{\ell}$---a contradiction. 

Thus, every returned value must appear in
$O_{\ell}$, where $\ell$ is the smallest level ($1\leq \ell \leq n$)
at which some process returns. 
Now we show that 
$|O_{\ell}|\leq \AL_m^C$, recall that $m$ is the number of
participating processes. 

Indeed, since all values that appear in $O_{\ell}$ were previously returned by
the algorithm $\S\T_{\ell}^C$ (Line~\ref{s8}) and, as we observed
earlier, the algorithm is used by at most $\ell$ processes, each
choosing  a unique position based on its rank in the corresponding
snapshot of size $\ell$, there can be at most $\AL_{\ell}^C$ such
values. Since at most $m$ processes participate in the considered run,
we have $\ell\leq m$, and, thus, $\AL_{\ell}^C\leq \AL_m^C$. 

Hence, in a run with
participating set of size $m$, $|O_{\ell}|\leq \AL_m^C$ and, thus, at
most $\AL_m^C$ values can be returned by the algorithm.
Thus, we indeed have an optimally adaptive set-consensus algorithm
using $C$. 
\end{proof}

\myparagraph{On unbounded concurrency.} Our definitions of the agreement
level and the set-consensus number of a set-consensus collection  are
independent of the size of the
system: they are defined with respect to a given participation level. 
Our adaptive algorithm does account for the system size, as it uses
atomic snapshots. 
But by employing the atomic-snapshot algorithms for
unbounded-concurrency models described in~\cite{GMT01}, we can easily
extend our adaptive solution to these models too.

\section{Related work}
\label{sec:related}
Our algorithm computing the power of a set-consensus collection in
$O(n^2)$ steps (for a system of $n$ processes) is inspired by  
the dynamic programming solution to the Knapsack optimization problem described,
e.g., in~\cite[Chap. 5]{knapsack}. 
 
Herlihy~\cite{Her91} introduced the notion of \emph{consensus number}
of a given object type, i.e., the
maximum number of processes that can solve consensus using instances
of the type and read-write registers. 
It has been shown that $n$-process consensus objects have consensus
power $n$. However, the corresponding consensus hierarchy is in
general not robust, i.e., there exist object types, each of consensus number $1$
which, combined together, can be used to solve $2$-process
consensus~\cite{LH00}.
Besides objects of the same consensus number $m$ may not be equivalent
in a system of more than $m$ processes~\cite{AEG16}.  

Borowsky and Gafni~\cite{BG93-TR}, and then Chaudhuri and Reiners~\cite{ChaR96,Rei96}
independently explored the power of having
multiple instances of $(\ell,j)$-set-consensus
objects in a system of $n$ processes with respect to solving set
consensus, which is a special case of the question considered in this
paper. The characterization of~\cite{BG93-TR,ChaR96,Rei96} is established by a
generalized BG simulation~\cite{BG93b,BGLR01} by Borowsky and Gafni,
where instead of $1$-agreement protocol, a more general
$j$-agreement protocol is used. Our results employ this
agreement protocol to show a more general result.  

Gafni and Koutsoupias~\cite{GK99-undecidable} and Herlihy and Rajsbaum~\cite{HR97} showed that wait-free
solvability of tasks for $3$ or more processes using registers is an undecidable question. 
We show that in a special case of solving set
consensus using a set-consensus collection, the question is decidable.
Moreover, we give an explicit polynomial algorithm for
computing the power of a set-consensus collection.  

\section{Concluding remarks}
\label{sec:conc}

We hope that this work will be a step towards proving a more general
conjecture that
our set-consensus numbers capture precisely the computing power of
any ``natural'' shared-memory model. 
An indication that the conjecture is true
is that set-consensus objects are, in a precise sense, \emph{universal}
(generalizing the consensus universality~\cite{Her91}): using
$(n,k)$-set-consensus objects, $n$ processes can implement $k$
independent sequential state machines so that at least one of them is able to make
progress, i.e., to execute infinitely many commands~\cite{GG11-univ}.   
Popular restrictions of the runs of the wait-free model, such as
\emph{adversaries}~\cite{DFGT11} and \emph{failure
  detectors}~\cite{CT96,CHT96}, were successfully characterized via
their power for solving set consensus~\cite{GK10,GK11,DFGK15}.  
Also, it can be inferred from the recent result by Afek et
al.~\cite{AEG16} that, like consensus,
set-consensus objects can express precisely certain deterministic objects~\cite{AEG16}.
We therefore believe that the power of a large class of ``natural''
models (determined by restrictions mentioned above) can be captured by
their ability to solve set consensus.
This class must exclude models in which ``in between''
objects, like \emph{Weak Symmetry-Breaking}~\cite{HS99,GRH06}, are
used: such models, as we believe, cannot be expressed as a restriction of
the runs of the wait-free model, and are therefore not ``natural''.


\begin{thebibliography}{10}

\bibitem{AADGMS93}
Y.~Afek, H.~Attiya, D.~Dolev, E.~Gafni, M.~Merritt, and N.~Shavit.
\newblock Atomic snapshots of shared memory.
\newblock {\em J.~ACM}, 40(4):873--890, 1993.

\bibitem{AEG16}
Y.~Afek, F.~Ellen, and E.~Gafni.
\newblock Deterministic objects: Life beyond consensus.
\newblock In {\em PODC}, 2016.

\bibitem{BG93b}
E.~Borowsky and E.~Gafni.
\newblock Generalized {FLP} impossibility result for $t$-resilient asynchronous
  computations.
\newblock In {\em STOC}, pages 91--100. ACM Press, May 1993.

\bibitem{BG93-TR}
E.~Borowsky and E.~Gafni.
\newblock The implication of the {Borowsky-Gafni} simulation on the
  set-consensus hierarchy.
\newblock Technical report, UCLA, 1993.
\newblock \url{http://fmdb.cs.ucla.edu/Treports/930021.pdf}.

\bibitem{BGLR01}
E.~Borowsky, E.~Gafni, N.~A. Lynch, and S.~Rajsbaum.
\newblock The {BG} distributed simulation algorithm.
\newblock {\em Distributed Computing}, 14(3):127--146, 2001.

\bibitem{CHT96}
T.~D. Chandra, V.~Hadzilacos, and S.~Toueg.
\newblock The weakest failure detector for solving consensus.
\newblock {\em J.~ACM}, 43(4):685--722, July 1996.

\bibitem{CT96}
T.~D. Chandra and S.~Toueg.
\newblock Unreliable failure detectors for reliable distributed systems.
\newblock {\em J.~ACM}, 43(2):225--267, Mar. 1996.

\bibitem{ChaR96}
S.~Chaudhuri and P.~Reiners.
\newblock Understanding the set consensus partial order using the
  {Borowsky-Gafni} simulation (extended abstract).
\newblock In {\em WDAG}, pages 362--379, 1996.

\bibitem{DFGK15}
C.~Delporte{-}Gallet, H.~Fauconnier, E.~Gafni, and P.~Kuznetsov.
\newblock Wait-freedom with advice.
\newblock {\em Distributed Computing}, 28(1):3--19, 2015.

\bibitem{DFGT11}
C.~Delporte-Gallet, H.~Fauconnier, R.~Guerraoui, and A.~Tielmann.
\newblock The disagreement power of an adversary.
\newblock {\em Distributed Computing}, 24(3-4):137--147, 2011.

\bibitem{FLP85}
M.~J. Fischer, N.~A. Lynch, and M.~S. Paterson.
\newblock Impossibility of distributed consensus with one faulty process.
\newblock {\em J.~ACM}, 32(2):374--382, Apr. 1985.

\bibitem{GG11-univ}
E.~Gafni and R.~Guerraoui.
\newblock Generalized universality.
\newblock In {\em Proceedings of the 22nd international conference on
  Concurrency theory}, CONCUR'11, pages 17--27, Berlin, Heidelberg, 2011.
  Springer-Verlag.

\bibitem{GK99-undecidable}
E.~Gafni and E.~Koutsoupias.
\newblock Three-processor tasks are undecidable.
\newblock {\em SIAM J. Comput.}, 28(3):970--983, 1999.

\bibitem{GK10}
E.~Gafni and P.~Kuznetsov.
\newblock Turning adversaries into friends: Simplified, made constructive, and
  extended.
\newblock In {\em OPODIS}, pages 380--394, 2010.

\bibitem{GK11}
E.~Gafni and P.~Kuznetsov.
\newblock {Relating {\it L}-Resilience and Wait-Freedom via Hitting Sets}.
\newblock In {\em ICDCN}, pages 191--202, 2011.

\bibitem{GMT01}
E.~Gafni, M.~Merritt, and G.~Taubenfeld.
\newblock The concurrency hierarchy, and algorithms for unbounded concurrency.
\newblock In {\em PODC}, pages 161--169, 2001.

\bibitem{GRH06}
E.~Gafni, S.~Rajsbaum, and M.~Herlihy.
\newblock Subconsensus tasks: Renaming is weaker than set agreement.
\newblock In {\em DISC}, pages 329--338, 2006.

\bibitem{Her91}
M.~Herlihy.
\newblock Wait-free synchronization.
\newblock {\em ACM Trans. Prog. Lang. Syst.}, 13(1):123--149, Jan. 1991.

\bibitem{HR97}
M.~Herlihy and S.~Rajsbaum.
\newblock The decidability of distributed decision tasks (extended abstract).
\newblock In {\em STOC}, pages 589--598, 1997.

\bibitem{HS99}
M.~Herlihy and N.~Shavit.
\newblock The topological structure of asynchronous computability.
\newblock {\em J.~ACM}, 46(2):858--923, 1999.

\bibitem{knapsack}
H.~Kellerer, U.~Pferschy, and D.~Pisinger.
\newblock {\em Knapsack problems}.
\newblock Springer, 2004.

\bibitem{LH00}
W.~Lo and V.~Hadzilacos.
\newblock All of us are smarter than any of us: Nondeterministic wait-free
  hierarchies are not robust.
\newblock {\em {SIAM} J. Comput.}, 30(3):689--728, 2000.

\bibitem{Rei96}
P.~Reiners.
\newblock Understanding the set consensus partial order using the
  {Borowsky-Gafni} simulation.
\newblock Master's thesis, Iowa State University, 1996.

\bibitem{SZ00}
M.~Saks and F.~Zaharoglou.
\newblock Wait-free k-set agreement is impossible: The topology of public
  knowledge.
\newblock {\em SIAM J. on Computing}, 29:1449--1483, 2000.

\end{thebibliography}

  \def\noopsort#1{} \def\No{\kern-.25em\lower.2ex\hbox{\char'27}}
  \def\no#1{\relax} \def\http#1{{\\{\small\tt
  http://www-litp.ibp.fr:80/{$\sim$}#1}}}

\appendix

\section{An $\ell$-agreement algorithm}
\label{app:lsa}

The algorithm (presented in Figure~\ref{fig:sa}) uses  
two \emph{atomic snapshot} objects $A$ and $B$, initialized with $\bot$'s.
A process writes its input in $A$
(line~\ref{line:sa:updateA}) and takes 
a snapshot of $A$ (line~\ref{line:sa:scanA}).
Then the process writes the outcome of the snapshot in $B$ (line~\ref{line:sa:updateB})  
and keeps taking snapshots of $B$ until
it finds that at most $\ell-1$ participating ( i.e., having written their  values
in $A$) processes that have not finished the protocol, i.e., have not written
their values in $B$ (Lines~\ref{line:sa:repeat}-\ref{line:sa:until}).  
Finally, the process returns the smallest value (we assume that the
value set is ordered) in the smallest-size non-$\bot$ snapshot found
in $B$ (containing the smallest number of non-$\bot$ values). 
(Recall that all snapshot outcomes are related by containment, so
there indeed exists such a smallest snapshot.)  
  
\begin{figure}[tbp]
\hrule \vspace{1mm}
 {\footnotesize
\begin{tabbing}
 bbb\=bb\=bb\=bb\=bb\=bb\=bb\=bb \=  \kill
\emph{Shared objects:} \\
\> $A$, $B$: \texttt{snapshot objects, initially} $\bot$\\
\\
\textit{propose}$(v)$\\
\nnll\label{line:sa:updateA}\> $A.\id{update(v)}$\\
\nnll\label{line:sa:scanA}\> $U:=A.\id{snapshot()}$\\
\nnll\label{line:sa:updateB}\> $B.\id{update(U)}$\\
\nnll\label{line:sa:repeat}\>  {\bf repeat} \\
\nnll\label{line:sa:scanB}\>\> $W:=B.\id{snapshot()}$\\
\nnll\label{line:sa:compute}\>\> $X:= \{j| (U[j]\neq\bot)\;\wedge\;(W[j]=\bot)$\}\\
\nnll\label{line:sa:until}\>  {\bf until } $|X|\leq\ell-1$\\
\nnll\label{line:sa:minscan}\> $\id{S}:=$ the smallest-size
set of non-$\bot$ values contained in
$\{\; W[j];\;j=1,\ldots,n,\; W[j]\neq\bot\}$\\
\nnll\label{line:ca:minvalue}\> $\id{return}$ $\min(S)$ 
\end{tabbing}
\hrule 
}
\caption{The $\ell$-agreement algorithm}
\label{fig:sa}
\end{figure}

\begin{theorem}\label{th:sa}
The algorithm in Figure~\ref{fig:sa} implements $\ell$-agreement.
\end{theorem}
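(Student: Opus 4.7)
The plan is to establish the three properties of $\ell$-agreement---validity, at-most-$\ell$ distinct outputs, and conditional termination---by exploiting the linear order of atomic-snapshot outputs. Validity is essentially immediate: the value returned on line~\ref{line:ca:minvalue} is a non-$\bot$ entry of some $W[j]$, which is itself a snapshot of $A$; any non-$\bot$ entry of such a snapshot was placed there through line~\ref{line:sa:updateA} and is therefore a proposed value.

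For the agreement bound, I would fix a run and enumerate the snapshots of $A$ that actually appear in it as a chain $U^{(1)}\subsetneq U^{(2)}\subsetneq\cdots$, with $U^{(i)}$ having exactly $i$ non-$\bot$ positions; let $q_i$ denote the process that produced $U^{(i)}$. For each process $p$ that returns, write $U^*_p$ for the minimum-size element of $\{W^p[j]:W^p[j]\neq\bot\}$ chosen on line~\ref{line:sa:minscan}; this is well-defined because all $W^p[j]$ are themselves snapshots of $A$ and hence linearly ordered by containment. Because $p$ updates $B[p]$ with $U^p$ on line~\ref{line:sa:updateB} before any scan on line~\ref{line:sa:scanB}, $U^p$ itself appears in $W^p$, forcing $U^*_p\subseteq U^p$.

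The key claim is then that $U^*_p\in\{U^{(1)},\ldots,U^{(\ell)}\}$, so that at most $\ell$ distinct values $\min(U^*_p)$ can ever be returned. Suppose $U^*_p=U^{(i)}$ and let $h<i$. Process $q_h$ had already written to $A$ before producing its snapshot $U^{(h)}\subsetneq U^{(i)}$, so $q_h$'s position is non-$\bot$ in $U^{(i)}$ and, since $U^{(i)}\subseteq U^p$, also non-$\bot$ in $U^p$. On the other hand, $q_h$ cannot have updated $B$ before $p$'s last scan, for otherwise $U^{(h)}$ would appear in $W^p$, contradicting the minimality of $U^{(i)}=U^*_p$. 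Hence the positions of $q_1,\ldots,q_{i-1}$ all lie in the set $X$ computed on line~\ref{line:sa:compute}, so $|X|\geq i-1$; combined with the exit condition $|X|\leq\ell-1$ on line~\ref{line:sa:until}, this gives $i\leq\ell$.

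Conditional termination is a simpler accounting. As $p$ iterates, the set $\{j:W^p[j]\neq\bot\}$ grows monotonically with each rescan, so $X$ shrinks monotonically. A process $j$ with $U^p[j]\neq\bot$ has already executed line~\ref{line:sa:updateA} and, unless it stops taking steps, will eventually reach line~\ref{line:sa:updateB} and make $W^p[j]\neq\bot$ in some later scan. Hence, assuming at most $\ell-1$ participating processes fail to take further steps, eventually $|X|\leq\ell-1$ and every correct participating $p$ exits the loop. The conceptual obstacle throughout is the chain argument of the third paragraph---reading the exit threshold $\ell-1$ off the position of $U^*_p$ in the chain of $A$-snapshots---after which validity and termination are routine.
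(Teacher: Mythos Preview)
Your argument is correct and follows essentially the same route as the paper's: both bound the returned value by showing that the smallest snapshot $U^*_p$ visible in $W^p$ must be among the $\ell$ smallest snapshots ever taken, using the exit condition $|X|\le\ell-1$ together with the fact that each ``missing'' smaller-snapshot owner contributes to $X$. The paper packages this as a contradiction (assume $\ell+1$ outputs, look at the owners of the $\ell$ smallest snapshots written to $B$), while you give the direct version; the counting is the same.

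One small imprecision in your setup: it is not true in general that the $i$-th distinct snapshot $U^{(i)}$ has exactly $i$ non-$\bot$ positions, nor that a unique process $q_i$ produced it (several processes may obtain the same snapshot of $A$). Fortunately you never use the size claim, and for the argument it suffices to pick \emph{some} $q_i$ that produced $U^{(i)}$; distinctness of $q_1,\ldots,q_{i-1}$ then follows because each process scans $A$ only once, so different snapshots come from different processes. With that clarification your chain argument goes through unchanged.
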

\begin{proof}
The validity property (i) is immediate: only the identifier of a
participating process can be
found in a snapshot object.  
The termination property (iii)$'$ of $\ell$-agreement is immediate: if at most
$\ell-1 $ processes that have executed line~\ref{line:sa:updateA} fail to execute
line~\ref{line:sa:updateB}, 
then the exit condition of the repeat-until clause in
line~\ref{line:sa:until} eventually holds and every correct
participating process
terminates.

Suppose, by contradiction, that at least $\ell+1$ different values are
returned by the algorithm.  
Thus, at least $\ell+1$ distinct snapshots were written in $B$ by
$\ell+1$ processes. 
Let $L$ be the set of processes that have written the $\ell$ smallest
snapshots  in $B$ in the run. 
The set is well-defined as all snapshots taken in $A$ are related by
containment. 
We are going to establish a contradiction by showing that every process must return the smallest value
in one of the snapshots written by the processes in $L$ and, thus, at
most $\ell$ distinct inputs will be produced.

Let $p_i$ be any process that completed line~\ref{line:sa:updateB} by
writing the result of its snapshot of $A$ in $B$.
Let  $U$ be the set of processes that $p_i$ witnessed in $A$ and, thus,
wrote to its position in $B$ in line~\ref{line:sa:updateB}. 


If $p_i\in L$, i.e., $U$ is one of the $\ell$ smallest snapshots ever
written in $B$, then $p_i$ will return the value of the smallest process in $U$ or
a smaller snapshot written by some process in $L$. 
If $p_i\notin L$, then $U$ contains all $\ell$ distinct snapshots
written by the processes in $L$.
Since each process in $L$ is included in the snapshot it has written
in $B$, we derive that $L\subseteq U$.   
Since $p_i$ returns a value only if all but at most $\ell-1$ processes it witnessed
participating have written their snapshots in $B$, at least one
snapshot written by a process in $L$ is read by $p_i$ in $B$. 
Thus, $p_i$ outputs the value of the smallest process in the snapshot written by a
process in $L$---a contradiction.

Thus, at most $\ell$ distinct values can be output and (ii)$'$ is satisfied. 
\end{proof}

\end{document}